\DeclarePairedDelimiter{\ceil}{\lceil}{\rceil}
\DeclarePairedDelimiter{\floor}{\lfloor}{\rfloor}
\NewDocumentCommand{\Math}{m}{%
  \ensuremath{#1}\xspace%
}
\NewDocumentCommand{\N}{o o}{%
  \IfValueTF{#1}{%
    \IfValueTF{#2}{%
      \Math{\mathbb{N}^{#1 \times #2}}%
    }{%
      \Math{\mathbb{N}^{#1}}%
    }%
  }{%
    \Math{\mathbb{N}}%
  }%
}
\NewDocumentCommand{\R}{o o}{%
  \IfValueTF{#1}{%
    \IfValueTF{#2}{%
      \Math{\mathbb{R}^{#1 \times #2}}%
    }{%
      \Math{\mathbb{R}^{#1}}%
    }%
  }{%
    \Math{\mathbb{R}}%
  }%
}
\NewDocumentCommand{\var}{o o d() m}{%
  \IfNoValueTF{#1}{%
    \IfNoValueTF{#3}{%
      \Math{#4}%
    }{%
      \Math{#4^{(#3)}}%
    }%
  }{%
    \IfNoValueTF{#2}{%
      \IfNoValueTF{#3}{%
        \Math{#4_{#1}}%
      }{%
        \Math{#4_{#1}^{(#3)}}%
      }%
    }{%
      \IfNoValueTF{#3}{%
        \Math{#4_{#1#2}}%
      }{%
        \Math{#4_{#1#2}^{(#3)}}%
      }%
    }%
  }%
}
\RenewDocumentCommand{\vec}{o d() m}{%
  \var[#1](#2){\bm{#3}}%
}
\NewDocumentCommand{\vectilde}{o d() m}{%
  \var[#1](#2){\tilde{\bm{#3}}}%
}
\NewDocumentCommand{\mat}{o o m}{%
  \var[#1][#2]{\bm{#3}}%
}
\NewDocumentCommand{\matinv}{o o m}{%
  \var[#1][#2]{\bm{#3}^{-1}}%
}
\NewDocumentCommand{\vecconcat}{m m}{%
  \Math{[#1,#2]}%
}
\NewDocumentCommand{\scalprod}{m m}{%
  \Math{(#1,#2)}%
}
\NewDocumentCommand{\scalalpha}{d()}{%
  \var(#1){\alpha}%
}
\NewDocumentCommand{\scalbeta}{d()}{%
  \var(#1){\beta}%
}
\NewDocumentCommand{\scalgamma}{d()}{%
  \var(#1){\gamma}%
}
\NewDocumentCommand{\scaldelta}{d()}{%
  \var(#1){\delta}%
}
\NewDocumentCommand{\scalzeta}{d()}{%
  \var(#1){\zeta}%
}
\NewDocumentCommand{\scaleta}{d()}{%
  \var(#1){\eta}%
}
\NewDocumentCommand{\scaltheta}{d()}{%
  \var(#1){\theta}%
}
\NewDocumentCommand{\scalkappa}{o d()}{%
  \var[#1](#2){\kappa}%
}
\NewDocumentCommand{\scallambda}{o d()}{%
  \var[#1](#2){\lambda}%
}
\NewDocumentCommand{\vecb}{o}{%
  \vec[#1]{b}%
}
\NewDocumentCommand{\vecc}{o d()}{%
  \vec[#1](#2){c}%
}
\NewDocumentCommand{\vecd}{o d()}{%
  \vec[#1](#2){d}%
}
\NewDocumentCommand{\vecg}{o d()}{%
  \vec[#1](#2){g}%
}
\NewDocumentCommand{\vech}{o d()}{%
  \vec[#1](#2){h}%
}
\NewDocumentCommand{\vecm}{o d()}{%
  \vec[#1](#2){m}%
}
\NewDocumentCommand{\vecn}{o d()}{%
  \vec[#1](#2){n}%
}
\NewDocumentCommand{\vecp}{o d()}{%
  \vec[#1](#2){p}%
}
\NewDocumentCommand{\vecq}{o d()}{%
  \vec[#1](#2){q}%
}
\NewDocumentCommand{\vecr}{o d()}{%
  \vec[#1](#2){r}%
}
\NewDocumentCommand{\vecs}{o d()}{%
  \vec[#1](#2){s}%
}
\NewDocumentCommand{\vecu}{o d()}{%
  \vec[#1](#2){u}%
}
\NewDocumentCommand{\vecw}{o d()}{%
  \vec[#1](#2){w}%
}
\NewDocumentCommand{\vecx}{o d()}{%
  \vec[#1](#2){x}%
}
\NewDocumentCommand{\vecz}{o d()}{%
  \vec[#1](#2){z}%
}
\NewDocumentCommand{\matA}{o o}{%
  \mat[#1][#2]{A}%
}
\NewDocumentCommand{\matM}{o o}{%
  \mat[#1][#2]{M}%
}
\NewDocumentCommand{\matMinv}{o o}{%
  \matinv[#1][#2]{M}%
}
\NewDocumentCommand{\matP}{o o}{%
  \mat[#1][#2]{P}%
}
\NewDocumentCommand{\Ref}{m m m}{%
  #1~\ref{#2:#3}\xspace%
}
\NewDocumentCommand{\Refs}{m m m m}{%
  #1~\ref{#2:#3} and~\ref{#2:#4}\xspace%
}
\NewDocumentCommand{\REFS}{m m m m}{%
  #1~\ref{#2:#3} to~\ref{#2:#4}\xspace%
}
\NewDocumentCommand{\Alg}{m}{%
  \Ref{Algorithm}{alg}{#1}%
}
\NewDocumentCommand{\Algs}{m m}{%
  \Refs{Algorithms}{alg}{#1}{#2}%
}
\NewDocumentCommand{\Eqn}{m}{%
  \Ref{Equation}{eqn}{#1}%
}
\NewDocumentCommand{\Figs}{m m}{%
  \Refs{Figures}{fig}{#1}{#2}%
}
\NewDocumentCommand{\Lem}{m}{%
  \Ref{Lemma}{lem}{#1}%
}
\NewDocumentCommand{\Lin}{m m}{%
  \Ref{line}{alg}{#1:#2} of \Alg{#1}%
}
\NewDocumentCommand{\Lins}{m m m}{%
  \Refs{lines}{alg}{#1:#2}{#1:#3} of \Alg{#1}%
}
\NewDocumentCommand{\LINS}{m m m}{%
  \REFS{lines}{alg}{#1:#2}{#1:#3} of \Alg{#1}%
}
\NewDocumentCommand{\Sec}{m}{%
  \Ref{Section}{sec}{#1}%
}
\NewDocumentCommand{\Secs}{m m}{%
  \Refs{Sections}{sec}{#1}{#2}%
}
\NewDocumentCommand{\Tab}{m}{%
  \Ref{Table}{tab}{#1}%
}
\NewDocumentCommand{\all}{}{%
  \Math{\star}%
}
\NewDocumentCommand{\compl}{m}{%
  \Math{\bar{#1}}%
}
\NewDocumentCommand{\nn}{}{%
  \Math{N}%
}
\NewDocumentCommand{\rn}{}{%
  \Math{\rho}%
}
\NewDocumentCommand{\nredu}{}{%
  \Math{\phi}%
}
\NewDocumentCommand{\nfail}{}{%
  \Math{\psi}%
}
\NewDocumentCommand{\nfi}{}{%
  \Math{i}%
}
\NewDocumentCommand{\BiState}{m m}{%
  \State{#1,\;#2}%
}
\NewDocumentCommand{\TriState}{m m m}{%
  \State{#1,\;#2,\;#3}%
}
\NewDocumentCommand{\latency}{}{%
  \ensuremath{\mu}\xspace%
}
\NewDocumentCommand{\invbw}{}{%
  \ensuremath{\nu}\xspace%
}
\NewDocumentCommand{\cg}{}{%
  CG\xspace%
}
\NewDocumentCommand{\CR}{}{%
  CR\xspace%
}
\NewDocumentCommand{\pcg}{}{%
  PCG\xspace%
}
\NewDocumentCommand{\pcr}{}{%
  PCR\xspace%
}
\NewDocumentCommand{\ppcg}{}{%
  PPCG\xspace%
}
\NewDocumentCommand{\ppcr}{}{%
  PPCR\xspace%
}
\NewDocumentCommand{\tppcg}{}{%
  2PPCG\xspace%
}
\begin{document}

\title{\Large Scalable Resilience Against Node Failures for
  Communication-Hiding Preconditioned Conjugate Gradient and Conjugate Residual
  Methods}
\author{Markus Levonyak\thanks{University of Vienna, Faculty of Computer
  Science, Vienna, Austria} 
  \and Christina Pacher\footnotemark[1]
  \and Wilfried N.\ Gansterer\footnotemark[1]%
    \protect\phantom{\footnotesize 1}\textsuperscript{,}%
    \thanks{Corresponding author}}
\date{}

\maketitle

\fancyfoot[R]{\scriptsize{Copyright \textcopyright\ 2020 by SIAM\\
Unauthorized reproduction of this article is prohibited}}

\begin{abstract}
\small\baselineskip=9pt
The observed and expected continued growth in the number of nodes in
large-scale parallel computers gives rise to two major challenges: global
communication operations are becoming major bottlenecks due to their limited
scalability, and the likelihood of node failures is increasing.
We study an approach for addressing these challenges in the context of solving
large sparse linear systems.
In particular, we focus on the pipelined preconditioned conjugate gradient
(\ppcg) method, which has been shown to successfully deal with the first of
these challenges.
In this paper, we address the second challenge.
We present extensions to the \ppcg solver and two of its variants which make
them resilient against the failure of a compute node while fully preserving
their communication-hiding properties and thus their scalability.
The basic idea is to efficiently communicate a few redundant copies of local
vector elements to neighboring nodes with very little overhead.
In case a node fails, these redundant copies are gathered at a replacement
node, which can then accurately reconstruct the lost parts of the solver's
state.
After that, the parallel solver can continue as in the failure-free scenario.
Experimental evaluations of our approach illustrate on average very low runtime
overheads compared to the standard non-resilient algorithms.
This shows that scalable algorithmic resilience can be achieved at low extra
cost.

\end{abstract}

\section{Introduction}
\label{sec:introduction}

The \emph{conjugate gradient}~(\cg) and \emph{conjugate residual}~(\CR)
algorithms as well as their preconditioned variants~(\pcg and \pcr) are widely
used iterative Krylov subspace methods for solving linear systems \(\matA \vecx
= \vecb\)~\cite{Hestenes1952a, Saad2003a}.
In many scientific applications, these linear systems are obtained from
discretizing partial differential equations that model the simulated problems.
The resulting system matrix \matA then typically is sparse and may contain only
a very small number of non-zero elements per row.
Both solvers are frequently run in parallel and are expected to be viable
choices for upcoming large-scale parallel computers with hundreds of thousand
or even millions of compute nodes.

However, two major challenges have to be overcome in order to optimize the PCG
and PCR methods for such large parallel computers.
Firstly, communication between a substantial fraction or even all of the
compute nodes, i.e., \emph{global} communication, becomes increasingly
expensive with a growing total number \nn of nodes.
This is particularly true for computing the dot products in each iteration of
PCG and PCR, which involves costly global synchronization.
The cost of a global reduction operation is \(\mathcal{O}(\log \nn)\) and,
thus, steadily grows with an increasing number of nodes~\cite{Ghysels2014a,
Hoefler2010a}.
Secondly, the reliability of computer clusters is predicted to deteriorate at
scale.
A compute node of a cluster may fail for many different reasons, e.g., some
hardware component malfunctions, the shared memory gets corrupted, or it loses
its connection to the interconnection network.
If we assume a---rather optimistic---mean time between failures~(MTBF) of a
century for an individual node, a cluster with \num{e5}~nodes, on average, will
encounter a node failure every nine hours.
Even worse, a system with \num{e6}~nodes will encounter a node failure every
53~minutes on average~\cite{Herault2015a}.
Therefore, we have to expect possibly several node failures during the
execution of a long-running scientific application.

To tackle the first challenge, solvers that try to either reduce global
communication or overlap global communication with both computation and local
communication have been suggested.
These two classes of solvers are commonly referred to as
\emph{communication-avoiding} and \emph{communication-hiding} solvers,
respectively.
Early communication-avoiding \cg methods comprise variants of the original
algorithm with only a single global synchronization point~\cite{Barrett1994a,
DAzevedo1992a, DAzevedo1993a, Eijkhout1992a, Meurant1987a, Saad1984a} as well
as a \cg variation with two three-term recurrences and only one reduction
operation~\cite{Saad2003a}.
For reducing global communication even further, \(s\)-step methods have been
introduced~\cite{Carson2018a, Chronopoulos1989a, Hoemmen2010a} and recently
applied in large-scale simulations~\cite{Idomura2018a, Mayumi2016a}.
These methods accomplish to reduce the number of global synchronizations by
\(\mathcal{O}(s)\) through computing the \cg iterations in blocks of \(s\).
However, \(s\)-step methods tend to become numerically unstable with increasing
\(s\).
Another approach for reducing global communication has been to enlarge the
Krylov subspace~\cite{Grigori2016a}.

Besides the overall reduction of global synchronization points, it has early
been suggested to overlap the dot products in the \cg method with
computation~\cite{DeSturler1995a, Demmel1993a}.
This principle has been enhanced by Ghysels et al.~\cite{Ghysels2013a,
Ghysels2014a} with the introduction of \emph{pipelined} solvers, first for the
generalized minimal residual~(GMRES) method and later for \pcg.
The pipelined \pcg~(\ppcg) algorithm performs only a single non-blocking
reduction per iteration and overlaps this global communication with the
application of the preconditioner as well as the computation of the sparse
matrix-vector product~(SpMV), which often requires solely local
communication~\cite{Ghysels2014a}.
Based on \ppcg, Ghysels and Vanroose derive the closely related pipelined
\pcr~(\ppcr) algorithm~\cite{Ghysels2014a}.
Both \ppcg and \ppcr are readily available in the widely used parallel
numerical library PETSc~\cite{Balay1997a, Balay2019a}.
Moreover, several studies investigate the numerical properties and performance
of the \ppcg method~\cite{Carson2018b, Cools2016a, Cools2018a, Cools2018b,
Cools2019a, Cools2019b}.
Others propose modified or alternative pipelined \pcg
methods~\cite{Cornelis2018a, Cools2017a, Eller2016a, Sanan2016a}, one of them
being the two-iteration pipelined \pcg~(\tppcg) algorithm by Eller and
Gropp~\cite{Eller2016a}, which is based on a three-term recurrence variant of
\pcg.

To overcome the second challenge described above, different general-purpose and
algorithm-specific fault-tolerance approaches have been discussed in the
literature.
Nowadays, the most commonly applied measures against node failures are various
\emph{checkpointing} and \emph{rollback-recovery} techniques, which frequently
save the full state of an executed application and restore the latest one in
case of a node failure~\cite{Herault2015a, Tiwari2014a}.
To avoid the usually considerable overhead of continuously saving the state of
an entire application, Chen~\cite{Chen2011a} and Pachajoa et
al.~\cite{Pachajoa2018a, Pachajoa2019a} exploit the inherent redundancy of the
SpMV in \pcg.
A well-defined strategy ensures enough redundant copies of the search direction
vectors in order to fully recover the whole state of \pcg after possibly
multiple simultaneous node failures.
An alternative approach by Langou et al.~\cite{Langou2007a} and Agullo et
al.~\cite{Agullo2013a, Agullo2016a} approximates the lost part of the latest
solution vector, which is then used as the initial guess for the restarted
solver.
Bosilca et al.~\cite{Bosilca2014a, Bosilca2015a, Herault2015a} suggest an
algorithm for integrating algorithm-specific with general-purpose
fault-tolerance techniques.
While Pachajoa and Gansterer \cite{Pachajoa2017a} evaluate the inherent
resilience properties of \cg after a node failure, others discuss the related
but independent problem of soft errors in \cg~\cite{Agullo2018a,
Bronevetsky2008a, Dichev2016a, Fasi2016a, Sao2013a, Shantharam2012a}.

In this work, we target the problem of solving large sparse symmetric and
positive-definite~(SPD) linear systems \(\matA \vecx = \vecb\) on parallel
computers that both are susceptible to node failures and have high cost of
global communication compared to computation and local communication.
For this purpose, we introduce an innovative combination of
communication-hiding solvers and algorithm-specific resilience against node
failures.
We focus on the broadly discussed \ppcg solver~\cite{Carson2018b, Cools2016a,
Cools2018a, Cools2018b, Cools2019a, Cools2019b, Ghysels2014a} but also consider
the \ppcr~\cite{Ghysels2014a} and \tppcg~\cite{Eller2016a} algorithms.
For coping with node failures, we propose novel recovery methods for the \ppcg,
\ppcr, and \tppcg solvers, which are partly related to the techniques for the
\pcg method suggested by Chen~\cite{Chen2011a} and Pachajoa et
al.~\cite{Pachajoa2018a, Pachajoa2019a}.
Those techniques are likely to be more \emph{scalable}---i.e., better suited
for large-scale computer clusters---than general-purpose fault-tolerance
techniques.
In numerical experiments, we demonstrate low runtime overheads of our resilient
\ppcg algorithm.

\begin{algorithm}[t]
\caption{Preconditioned conjugate gradient (\pcg)
         method~\cite[Alg.~9.1]{Saad2003a}}
\label{alg:pcg}
\begin{algorithmic}[1]
  \TriState{\(\vecr(0) \gets \vecb - \matA \vecx(0)\)}
           {\(\vecu(0) \gets \matMinv \vecr(0)\)}
           {\(\vecp(0) \gets \vecu(0)\)}
  \State \(\scalgamma(0) \gets \scalprod{\vecr(0)}{\vecu(0)}\)
  \For{\(i \gets 0, 1, \dots, \text{until convergence}\)}
    \State \(\vecs(i) \gets \matA \vecp(i)\)
      \label{alg:pcg:s}
    \State \(\scaldelta(i) \gets \scalprod{\vecs(i)}{\vecp(i)}\)
      \label{alg:pcg:delta}
    \State \(\scalalpha(i) \gets \scalgamma(i) / \scaldelta(i)\)
      \label{alg:pcg:alpha}
    \State \(\vecx(i+1) \gets \vecx(i) + \scalalpha(i) \vecp(i)\)
      \label{alg:pcg:x}
    \State \(\vecr(i+1) \gets \vecr(i) - \scalalpha(i) \vecs(i)\)
      \label{alg:pcg:r}
    \State \(\vecu(i+1) \gets \matMinv \vecr(i+1)\)
      \label{alg:pcg:u}
    \State \(\scalgamma(i+1) \gets \scalprod{\vecr(i+1)}{\vecu(i+1)}\)
      \label{alg:pcg:gamma}
    \State \(\scalbeta(i) \gets \scalgamma(i+1) / \scalgamma(i)\)
      \label{alg:pcg:beta}
    \State \(\vecp(i+1) \gets \vecu(i+1) + \scalbeta(i) \vecp(i)\)
      \label{alg:pcg:p}
  \EndFor
\end{algorithmic}
\end{algorithm}

\subsection{Terminology and assumptions}
\label{sec:introduction:terminology}

As a consequence of a node failure, the affected node becomes unavailable, and
a node that replaces it in the recovery process is called a \emph{replacement
node}.
The replacement node is either a spare node or one of the surviving nodes.
In this paper, we assume that the parallel runtime environment provides
functionality comparable to state-of-the-art implementations of the
industry-standard \emph{Message Passing Interface}~(MPI)~\cite{MPIF2015a}.
Moreover, we assume that the runtime environment provides some basic
fault-tolerance features.
A prototypical example is the \emph{User Level Failure Mitigation}~(ULFM)
framework~\cite{Bland2013a, MPIF2017a}, an extension of the MPI standard.
It supports basic functionality which our approach is based on, including the
detection of node failures, preventing indefinitely blocking synchronizations
or communications, notifying the surviving nodes which nodes have failed, and a
mechanism for providing replacement nodes.

Like in widely used libraries such as PETSc~\cite{Balay1997a, Balay2019a}, we
use a block-row data distribution of all sparse matrices and vectors across the
\nn nodes of the parallel system.
In particular, for an \(n \times n\) linear system, every node owns blocks of
\(n / \nn\) contiguous rows (if \(n = c \nn\) with \(c \in \N\),
otherwise some nodes own \(\floor{n / \nn}\) and others \(\ceil{n / \nn}\)
rows) of all matrices and vectors involved.
On a single node, the data block stored in its shared memory is evenly
distributed among the processors of the node.
Since each node owns rows of all matrices and vectors involved, a node failure
leads to the loss of a part of every matrix and vector.
With the \emph{state} of an iterative solver we mean the---not necessarily
minimal---set of data that completely determines the future behavior of this
iterative solver.

Given a vector \vec(i){v}, where \(i\) denotes the iteration number of the
linear solver, \vec[j](i){v} refers to the subset of elements of the vector at
iteration \(i\) owned by node~\(j\).
For a matrix \mat{B}, the block of rows of \mat{B} owned by node~\(j\) is
denoted by \mat[j][\all]{B}.
On the other hand, \mat[\all][k]{B} is the block of columns of \mat{B}
corresponding to the indices of the rows owned by node~\(k\).
Consequently, \mat[j][k]{B} is the submatrix consisting of the rows owned by
node~\(j\) and the columns corresponding to the indices of the rows owned by
node~\(k\).
\compl{k} stands for all indices except for those of node~\(k\).
\vecconcat{\vec{v}}{\vec{w}} denotes the concatenation of vectors \vec{v} and
\vec{w} to a matrix.
The failed node as well as the replacement node are referred to as node~\rn.

\begin{algorithm}[t]
\caption{Pipelined preconditioned conjugate gradient~(\ppcg)
         method~\cite[Alg.~4]{Ghysels2014a}}
\label{alg:ppcg}
\begin{algorithmic}[1]
  \TriState{\(\vecr(0) \gets \vecb - \matA \vecx(0)\)}
           {\(\vecu(0) \gets \matMinv \vecr(0)\)}
           {\(\vecw(0) \gets \matA \vecu(0)\)}
  \For{\(i \gets 0, 1, \dots, \text{until convergence}\)}
    \BiState{\(\scalgamma(i) \gets \scalprod{\vecr(i)}{\vecu(i)}\)}
            {\(\scaldelta(i) \gets \scalprod{\vecw(i)}{\vecu(i)}\)}
      \label{alg:ppcg:gammadelta}
    \State \(\vecm(i) \gets \matMinv \vecw(i)\)
      \label{alg:ppcg:m}
    \State \(\vecn(i) \gets \matA \vecm(i)\)
      \label{alg:ppcg:n}
    \If{\(i = 0\)}
      \BiState{\(\scalalpha(i) \gets \scalgamma(i) / \scaldelta(i)\)}
              {\(\scalbeta(i) \gets 0\)}
        \label{alg:ppcg:alpha0beta0}
    \Else
      \State \(\scalbeta(i) \gets \scalgamma(i) / \scalgamma(i-1)\)
        \label{alg:ppcg:beta}
      \State \(\scalalpha(i) \gets \scalgamma(i) /
               (\scaldelta(i) - \scalbeta(i) \scalgamma(i) / \scalalpha(i-1))\)
        \label{alg:ppcg:alpha}
    \EndIf
    \State \(\vecz(i) \gets \vecn(i) + \scalbeta(i) \vecz(i-1)\)
      \label{alg:ppcg:z}
    \State \(\vecq(i) \gets \vecm(i) + \scalbeta(i) \vecq(i-1)\)
      \label{alg:ppcg:q}
    \State \(\vecs(i) \gets \vecw(i) + \scalbeta(i) \vecs(i-1)\)
      \label{alg:ppcg:s}
    \State \(\vecp(i) \gets \vecu(i) + \scalbeta(i) \vecp(i-1)\)
      \label{alg:ppcg:p}
    \State \(\vecx(i+1) \gets \vecx(i) + \scalalpha(i) \vecp(i)\)
      \label{alg:ppcg:x}
    \State \(\vecr(i+1) \gets \vecr(i) - \scalalpha(i) \vecs(i)\)
      \label{alg:ppcg:r}
    \State \(\vecu(i+1) \gets \vecu(i) - \scalalpha(i) \vecq(i)\)
      \label{alg:ppcg:u}
    \State \(\vecw(i+1) \gets \vecw(i) - \scalalpha(i) \vecz(i)\)
      \label{alg:ppcg:w}
  \EndFor
\end{algorithmic}
\end{algorithm}

\subsection{Main contributions}
\label{sec:introduction:contributions}

Although communication-hiding (along with communication-avoiding) iterative
solvers and algorithm-specific resilience techniques against node failures are
both motivated by the specific properties of future large-scale parallel
computers, there has been, to the best of our knowledge, no attempt so far to
combine the advantages of those two approaches.
In this paper, we propose novel strategies for recovery after node failures
occurred during the execution of the communication-hiding
\ppcg~\cite{Carson2018b, Cools2016a, Cools2018a, Cools2018b, Cools2019a,
Cools2019b, Ghysels2014a} as well as \ppcr~\cite{Ghysels2014a} and
\tppcg~\cite{Eller2016a} solvers.
To this end, we build upon recent work by Chen~\cite{Chen2011a} and
Pachajoa et al.~\cite{Pachajoa2018a, Pachajoa2019a} regarding resilience
against node failures for the classical \pcg solver.
We eventually show the low runtime overhead of our new fault-tolerant \ppcg
solver in numerical experiments.

\medskip

\noindent The remainder of the paper is structured as follows.
First, in \Sec{solvers}, we discuss in more detail the considered
communication-hiding solvers including their most important properties.
Next, in \Sec{redundancy}, we review how to ensure enough data redundancy for
coping with node failures in the \pcg method and illustrate the relevance for
the \ppcg, \ppcr, and \tppcg algorithms.
Then, in \Sec{recovery}, we derive and describe our novel strategies for
recovering the full state of \ppcg and the other solvers after node failures
occurred.
After that, in \Sec{experiments}, we outline our experiments and present the
results.
Finally, in \Sec{conclusions}, we summarize our conclusions.

\section{Communication-hiding iterative solvers}
\label{sec:solvers}

\begin{algorithm}[t]
\caption{Pipelined preconditioned conjugate residual~(\ppcr)
         method~\cite[Alg.~5]{Ghysels2014a}}
\label{alg:ppcr}
\begin{algorithmic}[1]
  \TriState{\(\vecr(0) \gets \vecb - \matA \vecx(0)\)}
           {\(\vecu(0) \gets \matMinv \vecr(0)\)}
           {\(\vecw(0) \gets \matA \vecu(0)\)}
  \For{\(i \gets 0, 1, \dots, \text{until convergence}\)}
    \State \(\vecm(i) \gets \matMinv \vecw(i)\)
      \label{alg:ppcr:m}
    \BiState{\(\scalgamma(i) \gets \scalprod{\vecw(i)}{\vecu(i)}\)}
            {\(\scaldelta(i) \gets \scalprod{\vecm(i)}{\vecw(i)}\)}
      \label{alg:ppcr:gammadelta}
    \State \(\vecn(i) \gets \matA \vecm(i)\)
      \label{alg:ppcr:n}
    \If{\(i = 0\)}
      \BiState{\(\scalalpha(i) \gets \scalgamma(i) / \scaldelta(i)\)}
              {\(\scalbeta(i) \gets 0\)}
        \label{alg:ppcr:alpha0beta0}
    \Else
      \State \(\scalbeta(i) \gets \scalgamma(i) / \scalgamma(i-1)\)
        \label{alg:ppcr:beta}
      \State \(\scalalpha(i) \gets \scalgamma(i) /
               (\scaldelta(i) - \scalbeta(i) \scalgamma(i) / \scalalpha(i-1))\)
        \label{alg:ppcr:alpha}
    \EndIf
    \State \(\vecz(i) \gets \vecn(i) + \scalbeta(i) \vecz(i-1)\)
      \label{alg:ppcr:z}
    \State \(\vecq(i) \gets \vecm(i) + \scalbeta(i) \vecq(i-1)\)
      \label{alg:ppcr:q}
    \State \(\vecp(i) \gets \vecu(i) + \scalbeta(i) \vecp(i-1)\)
      \label{alg:ppcr:p}
    \State \(\vecx(i+1) \gets \vecx(i) + \scalalpha(i) \vecp(i)\)
      \label{alg:ppcr:x}
    \State \(\vecu(i+1) \gets \vecu(i) - \scalalpha(i) \vecq(i)\)
      \label{alg:ppcr:u}
    \State \(\vecw(i+1) \gets \vecw(i) - \scalalpha(i) \vecz(i)\)
      \label{alg:ppcr:w}
  \EndFor
\end{algorithmic}
\end{algorithm}

In this section, we review the basic ideas and main properties of the
communication-hiding iterative linear solvers we later consider in the context
of fault tolerance.
We first study the \ppcg method~\cite{Ghysels2014a} in \Sec{solvers:ppcg},
including its foundations in the original \pcg algorithm~\cite{Hestenes1952a,
Saad2003a}.
Subsequently, in \Sec{solvers:other}, we briefly highlight the modifications
that lead to the \ppcr algorithm~\cite{Ghysels2014a} and discuss an alternative
to the \ppcg solver, the \tppcg method~\cite{Eller2016a}.

\subsection{Pipelined \pcg}
\label{sec:solvers:ppcg}

The communication-hiding \ppcg solver~\cite{Ghysels2014a} is a reformulation of
the classical \pcg method~\cite{Hestenes1952a, Saad2003a}.
Let \(\matA \in \R[n][n]\) be an SPD matrix and \(\matMinv \in \R[n][n]\) be an
appropriate SPD preconditioner for \matA, i.e., \(\kappa(\matMinv \matA) <
\kappa(\matA)\), where \(\kappa(\mat{B})\) denotes the condition number of a
matrix \mat{B}.
Furthermore, let \(\vecb \in \R[n]\) and \(\vecx \in \R[n]\) be the
right-hand-side and solution vectors, respectively.
For iteratively solving a given sparse linear system \(\matA \vecx = \vecb\),
the \pcg method, which is listed in \Alg{pcg}, actually solves the
left-preconditioned linear system \(\matMinv \matA \vecx = \matMinv \vecb\)
for accelerated convergence.
After the solver has converged, the iterate \vecx(i) is reasonably close to the
solution vector \vecx.
The search direction vectors \vecp(i) are chosen to be mutually
\matA-orthogonal, i.e., \(\scalprod{\matA \vecp(i)}{\vecp(j)} = 0\) for all
\(i \neq j\).
The residual vector \vecr(i) is defined by the relation \(\vecr(i) = \vecb -
\matA \vecx(i)\).
Additionally, the \pcg solver keeps the preconditioned residual vector
\(\vecu(i) \coloneqq \matMinv \vecr(i)\) and the vector \(\vecs(i) \coloneqq
\matA \vecp(i)\).
For computing the dot products, there are two global synchronization points in
each iteration of \pcg (\Lins{pcg}{delta}{gamma}).
Since the results of those dot products are needed immediately afterwards, both
global communication operations are blocking.

For being able to reorder the \pcg operations such that we have only one global
synchronization point and the possibility to overlap global communication with
computation and local communication, the \ppcg method, which is shown in
\Alg{ppcg}, has to keep the five additional vectors \(\vecw(i) \coloneqq \matA
\vecu(i)\), \(\vecm(i) \coloneqq \matMinv \vecw(i)\), \(\vecn(i) \coloneqq
\matA \vecm(i)\), \(\vecq(i) \coloneqq \matMinv \vecs(i)\), and \(\vecz(i)
\coloneqq \matA \vecq(i)\).
By left-multiplying \matA and \matMinv to relations of the original \pcg
algorithm, Ghysels and Vanroose~\cite{Ghysels2014a} derive new recurrence
relations that allow them to reorder and merge the two dot product computations
to just one global reduction operation (\Lin{ppcg}{gammadelta}).
Furthermore, since the results of the dot products are not needed before
\Lins{ppcg}{alpha0beta0}{beta}, the global reductions can be computed as
\emph{non-blocking} operations and, therefore, can be overlapped with the
application of the preconditioner as well as the SpMV computation in
\Lins{ppcg}{m}{n}.
These two operations typically require only local communication.
However, the \ppcg method has to compute eight (\LINS{ppcg}{z}{w}) instead of
just three vector updates as in the \pcg algorithm.
Although both solvers are mathematically equivalent, we may see different
numerical error propagation in finite precision~\cite{Ghysels2014a}.

\subsection{Other solvers}
\label{sec:solvers:other}

\begin{algorithm}[t!]
\caption{Two-iteration pipelined preconditioned conjugate gradient~(\tppcg)
         method~\cite{Eller2016a}}
\label{alg:2ppcg}
\begin{algorithmic}[1]
  \TriState{\(\vecr(0) \gets \vecb - \matA \vecx(0)\)}
           {\(\vecu(0) \gets \matMinv \vecr(0)\)}
           {\(\vecw(0) \gets \matA \vecu(0)\)}
  \BiState{\(\scalgamma(0) \gets \scalprod{\vecu(0)}{\vecr(0)}\)}
          {\(\scaldelta(0) \gets \scalprod{\vecu(0)}{\vecw(0)}\)}
  \BiState{\(\vecm(0) \gets \matMinv \vecw(0)\)}
          {\(\vecn(0) \gets \matA \vecm(0)\)}
  \BiState{\(\vecc(0) \gets \matMinv \vecn(0)\)}
          {\(\vecd(0) \gets \matA \vecc(0)\)}
  \For{\(i \gets 0, 2, \dots, \text{until convergence}\)}
    \If{\(i = 0\)}
      \TriState{\(\scalzeta(i+1) \gets 1\)}
               {\(\scaleta(i+1) \gets \scalgamma(i) / \scaldelta(i)\)}
               {\(\scaltheta(i+1) \gets 0\)}
    \Else
      \State \(\scaleta(i) \gets \scalgamma(i-1) / \scaldelta(i-1)\)
      \State \(\scalzeta(i) \gets 1 / (1 - \scalgamma(i-1) \scaleta(i) /
               (\scalgamma(i-2) \scalzeta(i-1) \scaleta(i-1)))\)
      \TriState{\(\scalkappa[1] \gets \scalzeta(i)\)}
               {\(\scalkappa[2] \gets -\scalzeta(i) \scaleta(i)\)}
               {\(\scalkappa[3] \gets 1 - \scalzeta(i)\)}
      \State \(\scalgamma(i) \gets
               \scalkappa[1] \scalkappa[1] \scallambda[1] -
               2 \scalkappa[1] \scalkappa[2] \scallambda[7] +
               2 \scalkappa[1] \scalkappa[3] \scallambda[6]\)
      \Statex \(\hspace{\algorithmicindent} \hspace{\algorithmicindent}
                \hphantom{\scalgamma(i) \gets} \mathllap{+} \;
                \scalkappa[2] \scalkappa[2] \scallambda[2] -
                2 \scalkappa[2] \scalkappa[3] \scallambda[3] +
                \scalkappa[3] \scalkappa[3] \scallambda[8]\)
      \State \(\scaldelta(i) \gets
               \scalkappa[1] \scalkappa[1] \scallambda[7] -
               2 \scalkappa[1] \scalkappa[2] \scallambda[2] +
               2 \scalkappa[1] \scalkappa[3] \scallambda[3]\)
      \Statex \(\hspace{\algorithmicindent} \hspace{\algorithmicindent}
                \hphantom{\scaldelta(i) \gets} \mathllap{+} \;
                \scalkappa[2] \scalkappa[2] \scallambda[4] -
                2 \scalkappa[2] \scalkappa[3] \scallambda[5] +
                \scalkappa[3] \scalkappa[3] \scallambda[8]\)
      \State \(\scaleta(i+1) \gets \scalgamma(i) / \scaldelta(i)\)
      \State \(\scalzeta(i+1) \gets 1 / (1 - \scalgamma(i) \scaleta(i+1) /
               (\scalgamma(i-1) \scalzeta(i) \scaleta(i)))\)
      \BiState{\(\scaltheta(i) \gets \scalkappa[3]\)}
              {\(\scaltheta(i+1) \gets 1 - \scalzeta(i+1)\)}
      \State \(\vecx(i) \gets \scalzeta(i) (\vecx(i-1) + \scaleta(i)
               \vecu(i-1)) + \scaltheta(i) \vecx(i-2)\)
      \State \(\vecr(i) \gets \scalzeta(i) (\vecr(i-1) - \scaleta(i)
               \vecw(i-1)) + \scaltheta(i) \vecr(i-2)\)
      \State \(\vecu(i) \gets \scalzeta(i) (\vecu(i-1) - \scaleta(i)
               \vecm(i-1)) + \scaltheta(i) \vecu(i-2)\)
      \State \(\vecw(i) \gets \scalzeta(i) (\vecw(i-1) - \scaleta(i)
               \vecn(i-1)) + \scaltheta(i) \vecw(i-2)\)
      \State \(\vecm(i) \gets \scalzeta(i) (\vecm(i-1) - \scaleta(i)
               \vecc(i-1)) + \scaltheta(i) \vecm(i-2)\)
      \State \(\vecn(i) \gets \scalzeta(i) (\vecn(i-1) - \scaleta(i)
               \vecd(i-1)) + \scaltheta(i) \vecn(i-2)\)
      \State \(\vecc(i) \gets \scalzeta(i) (\vecc(i-1) - \scaleta(i)
               \vecg(i-1)) + \scaltheta(i) \vecc(i-2)\)
      \State \(\vecd(i) \gets \scalzeta(i) (\vecd(i-1) - \scaleta(i)
               \vech(i-1)) + \scaltheta(i) \vecd(i-2)\)
    \EndIf
    \State \(\vecx(i+1) \gets \scalzeta(i+1) (\vecx(i) + \scaleta(i+1)
             \vecu(i)) + \scaltheta(i+1) \vecx(i-1)\)
    \State \(\vecr(i+1) \gets \scalzeta(i+1) (\vecr(i) - \scaleta(i+1)
             \vecw(i)) + \scaltheta(i+1) \vecr(i-1)\)
    \State \(\vecu(i+1) \gets \scalzeta(i+1) (\vecu(i) - \scaleta(i+1)
             \vecm(i)) + \scaltheta(i+1) \vecu(i-1)\)
    \State \(\vecw(i+1) \gets \scalzeta(i+1) (\vecw(i) - \scaleta(i+1)
             \vecn(i)) + \scaltheta(i+1) \vecw(i-1)\)
    \State \(\vecm(i+1) \gets \scalzeta(i+1) (\vecm(i) - \scaleta(i+1)
             \vecc(i)) + \scaltheta(i+1) \vecm(i-1)\)
    \State \(\vecn(i+1) \gets \scalzeta(i+1) (\vecn(i) - \scaleta(i+1)
             \vecd(i)) + \scaltheta(i+1) \vecn(i-1)\)
    \BiState{\(\scallambda[1] \gets \scalprod{\vecu(i+1)}{\vecw(i+1)}\)}
            {\(\scallambda[2] \gets \scalprod{\vecu(i+1)}{\vecw(i)}\)}
      \label{alg:2ppcg:lambda12}
    \BiState{\(\scallambda[3] \gets \scalprod{\vecm(i+1)}{\vecn(i+1)}\)}
            {\(\scallambda[4] \gets \scalprod{\vecm(i+1)}{\vecw(i)}\)}
    \BiState{\(\scallambda[5] \gets \scalprod{\vecu(i)}{\vecw(i)}\)}
            {\(\scallambda[6] \gets \scalprod{\vecu(i+1)}{\vecr(i)}\)}
    \BiState{\(\scallambda[7] \gets \scalprod{\vecu(i)}{\vecr(i)}\)}
            {\(\scallambda[8] \gets \scalprod{\vecu(i+1)}{\vecu(i+1)}\)}
      \label{alg:2ppcg:lambda78}
    \BiState{\(\scalgamma(i+1) \gets \scalprod{\vecu(i+1)}{\vecr(i+1)}\)}
            {\(\scaldelta(i+1) \gets \scallambda[1]\)}
      \label{alg:2ppcg:gammadelta}
    \BiState{\(\vecc(i+1) \gets \matMinv \vecn(i+1)\)}
            {\(\vecd(i+1) \gets \matA \vecc(i+1)\)}
      \label{alg:2ppcg:cd}
    \BiState{\(\vecg(i+1) \gets \matMinv \vecd(i+1)\)}
            {\(\vech(i+1) \gets \matA \vecg(i+1)\)}
      \label{alg:2ppcg:gh}
  \EndFor
\end{algorithmic}
\end{algorithm}

When the \matM inner product that is used for the dot products in \ppcg is
replaced by the \matA inner product (\(\matMinv \matA\) is also self-adjoint
with respect to this inner product), we obtain the \ppcr method, which is
listed in \Alg{ppcr}, as a variation of the \ppcg solver~\cite{Ghysels2014a}.
Since the preconditioner has then to be applied before the dot products, the
merged global reduction operation (\Lin{ppcr}{gammadelta}) can only be
overlapped with the SpMV computation (\Lin{ppcr}{n}).
Moreover, there is no dependence on \vecr(i) and \vecs(i) anymore and, hence,
those vectors do not need to be updated in every iteration.
In this case, the convergence criterion can be based on the preconditioned
residual \vecu(i) instead of the residual \vecr(i).

Eller and Gropp~\cite{Eller2016a} suggest an alternative pipelined \pcg
algorithm based on a \pcg variant with two three-term instead of three two-term
recurrences.
The resulting \tppcg method, which is shown in \Alg{2ppcg}, computes two \pcg
iterations at once.
In addition to the vectors \vecx(i), \vecr(i), \vecu(i), \vecw(i), \vecm(i),
and \vecn(i) already known from \ppcg, it keeps and updates the vectors
\(\vecc(i) \coloneqq \matMinv \vecn(i)\), \(\vecd(i) \coloneqq \matA
\vecc(i)\), \(\vecg(i) \coloneqq \matMinv \vecd(i)\), and \(\vech(i) \coloneqq
\matA \vecg(i)\).
All except the latter two vectors have to be stored for both consecutive
iterations that are computed together.
The \tppcg solver merges multiple dot products into one global reduction
operation (\LINS{2ppcg}{lambda12}{gammadelta}).
This global communication operation is overlapped with the computation of two
preconditioner applications and two SpMV computations (\LINS{2ppcg}{cd}{gh}).

\section{Data redundancy}
\label{sec:redundancy}

In order to attain algorithm-specific resilience against node failures for the
considered communication-hiding \pcg and \pcr solvers, we have to take two
separate aspects into account.
On the one hand, we need to have recovery procedures that reconstruct the full
state of the solver after node failures occurred.
We discuss the recovery process after unexpected node failures in
\Sec{recovery}.
However, for those recovery procedures to work properly, we need to have some
guaranteed data redundancy.
Hence, on the other hand, we need to exploit the specific properties of the
solvers to achieve the required minimum level of data redundancy as
cost-effective as possible.
The outlined strategy for data redundancy we now apply to communication-hiding
solvers has originally been proposed for the classical \pcg
method~\cite{Chen2011a, Pachajoa2018a, Pachajoa2019a}.

All of the solvers reviewed in \Sec{solvers} compute at least one SpMV per
iteration.
We particularly consider the computation of \(\matA \vecp(i)\) in \pcg
(\Lin{pcg}{s}), \(\matA \vecm(i)\) in \ppcg and \ppcr (\Lin{ppcg}{n} and
\Lin{ppcr}{n}), and \(\matA \vecc(i+1)\) in \tppcg (\Lin{2ppcg}{cd}).
During the SpMV computation, vector elements from other nodes---for many sparse
matrices especially from \emph{neighbor} nodes, i.e., usually \emph{local}
communication is sufficient---are required on node \(j\), \(j \in \{1, 2,
\dots, \nn\}\).
In the non-resilient standard solvers, all but the elements of the own block
(\vecp[j](i) in \pcg, \vecm[j](i) in \ppcg and \ppcr, and \vecc[j](i+1) in
\tppcg) can be dropped on node \(j\) after the product has been computed.
For recovering the full solver state, we need to have entire copies of the
vectors involved in SpMV from the latest \emph{two} solver iterations, i.e.,
\vecp(i-1) and \vecp(i) for \pcg, \vecm(i-1) and \vecm(i) for \ppcg and \ppcr,
or \vecc(i) and \vecc(i+1) for \tppcg (cf.\ \Sec{recovery}).
Hence, the vector blocks \vecp[\rn](i-1) and \vecp[\rn](i) (for \pcg),
\vecm[\rn](i-1) and \vecm[\rn](i) (for \ppcg and \ppcr), or \vecc[\rn](i) and
\vecc[\rn](i+1) (for \tppcg) of the failed node \rn must be available as well at
the beginning of the recovery process.
Thus, we have to make sure to keep enough redundant copies of each vector
element on other nodes than the owner after the SpMV computation in each solver
iteration, instead of dropping all of them as in the non-resilient standard
variant.
For the \tppcg solver, we store the redundant copies of \vecc[\rn](i) together
with those of \vecc[\rn](i+1) during computing \(\matA \vecc(i+1)\) (since this
solver computes two iterations at a time, cf.\ \Sec{solvers:other}).

\begin{algorithm}[t]
\caption{Node failure recovery (on replacement node~\rn) for the \pcg
         method~\cite[Alg.~4]{Pachajoa2018a}}
\label{alg:esrpcg}
\begin{algorithmic}[1]
\State Gather \vecr[\compl{\rn}](\nfi) and \vecx[\compl{\rn}](\nfi)
\State Retrieve static data \matA[\rn][\all], \matP[\rn][\all], and \vecb[\rn]
\State \label{alg:esrpcg:redundant}Retrieve redundant copies of
       \scalbeta(\nfi-1), \vecp[\rn](\nfi-1), and \vecp[\rn](\nfi)
\State Compute \(\vecu[\rn](\nfi) \gets
       \vecp[\rn](\nfi) - \scalbeta(\nfi-1) \vecp[\rn](\nfi-1)\)
\State Compute \(\vectilde[\rn](\nfi){u} \gets \vecu[\rn](\nfi) -
       \matP[\rn][\compl{\rn}] \vecr[\compl{\rn}](\nfi)\)
\State Solve \(\matP[\rn][\rn] \vecr[\rn](\nfi) = \vectilde[\rn](\nfi){u}\) for
       \vecr[\rn](\nfi)
       \label{alg:esrpcg:r}
\State Compute \(\vectilde[\rn](\nfi){b} \gets \vecb[\rn] - \vecr[\rn](\nfi) -
       \matA[\rn][\compl{\rn}] \vecx[\compl{\rn}](\nfi)\)
\State Solve \(\matA[\rn][\rn] \vecx[\rn](\nfi) = \vectilde[\rn](\nfi){b}\) for
       \vecx[\rn](\nfi)
       \label{alg:esrpcg:x}
\State Continue in \Lin{pcg}{s} at iteration~\nfi
\end{algorithmic}
\end{algorithm}

\begin{algorithm*}[t]
\caption{Node failure recovery (on replacement node~\rn) for the \ppcg method}
\label{alg:esrppcg}
\begin{algorithmic}[1]
\State Gather \vecr[\compl{\rn}](\nfi-1), \vecr[\compl{\rn}](\nfi),
       \vecu[\compl{\rn}](\nfi-1), \vecu[\compl{\rn}](\nfi),
       \vecw[\compl{\rn}](\nfi-1), \vecw[\compl{\rn}](\nfi),
       \vecx[\compl{\rn}](\nfi-1), and \vecx[\compl{\rn}](\nfi)
       \label{alg:esrppcg:gather}
\State Retrieve static data \matA[\rn][\all], \matP[\rn][\all], and \vecb[\rn]
\State Retrieve redundant copies of \scalalpha(\nfi-1), \scalgamma(\nfi-1),
       \scalgamma(\nfi), \scaldelta(\nfi), \vecm[\rn](\nfi-1), and
       \vecm[\rn](\nfi)
       \label{alg:esrppcg:redundant}
\State Compute \(\vecconcat{\vectilde[\rn](\nfi-1){m}}{\vectilde[\rn](\nfi){m}}
       \gets \vecconcat{\vecm[\rn](\nfi-1)}{\vecm[\rn](\nfi)} -
       \matP[\rn][\compl{\rn}]
       \vecconcat{\vecw[\compl{\rn}](\nfi-1)}{\vecw[\compl{\rn}](\nfi)}\)
       \label{alg:esrppcg:mtilde}
\State Solve \(\matP[\rn][\rn] \vecconcat{\vecw[\rn](\nfi-1)}{\vecw[\rn](\nfi)}
       = \vecconcat{\vectilde[\rn](\nfi-1){m}}{\vectilde[\rn](\nfi){m}}\) for
       \vecconcat{\vecw[\rn](\nfi-1)}{\vecw[\rn](\nfi)}
\State Compute \(\vecconcat{\vectilde[\rn](\nfi-1){w}}{\vectilde[\rn](\nfi){w}}
       \gets \vecconcat{\vecw[\rn](\nfi-1)}{\vecw[\rn](\nfi)} -
       \matA[\rn][\compl{\rn}]
       \vecconcat{\vecu[\compl{\rn}](\nfi-1)}{\vecu[\compl{\rn}](\nfi)}\)
\State Solve \(\matA[\rn][\rn] \vecconcat{\vecu[\rn](\nfi-1)}{\vecu[\rn](\nfi)}
       = \vecconcat{\vectilde[\rn](\nfi-1){w}}{\vectilde[\rn](\nfi){w}}\) for
       \vecconcat{\vecu[\rn](\nfi-1)}{\vecu[\rn](\nfi)}
\State Compute \(\vecconcat{\vectilde[\rn](\nfi-1){u}}{\vectilde[\rn](\nfi){u}}
       \gets \vecconcat{\vecu[\rn](\nfi-1)}{\vecu[\rn](\nfi)} -
       \matP[\rn][\compl{\rn}]
       \vecconcat{\vecr[\compl{\rn}](\nfi-1)}{\vecr[\compl{\rn}](\nfi)}\)
\State Solve \(\matP[\rn][\rn] \vecconcat{\vecr[\rn](\nfi-1)}{\vecr[\rn](\nfi)}
       = \vecconcat{\vectilde[\rn](\nfi-1){u}}{\vectilde[\rn](\nfi){u}}\) for
       \vecconcat{\vecr[\rn](\nfi-1)}{\vecr[\rn](\nfi)}
\State Compute \(\vecconcat{\vectilde[\rn](\nfi-1){b}}{\vectilde[\rn](\nfi){b}}
       \gets \vecconcat{\vecb[\rn]}{\vecb[\rn]} -
       \vecconcat{\vecr[\rn](\nfi-1)}{\vecr[\rn](\nfi)} -
       \matA[\rn][\compl{\rn}]
       \vecconcat{\vecx[\compl{\rn}](\nfi-1)}{\vecx[\compl{\rn}](\nfi)}\)
\State Solve \(\matA[\rn][\rn] \vecconcat{\vecx[\rn](\nfi-1)}{\vecx[\rn](\nfi)}
       = \vecconcat{\vectilde[\rn](\nfi-1){b}}{\vectilde[\rn](\nfi){b}}\) for
       \vecconcat{\vecx[\rn](\nfi-1)}{\vecx[\rn](\nfi)}
       \label{alg:esrppcg:x}
\State Compute \(\vecz[\rn](\nfi-1) \gets \nicefrac{1}{\scalalpha(\nfi-1)}
       (\vecw[\rn](\nfi-1) - \vecw[\rn](\nfi))\)
       \label{alg:esrppcg:z}
\State Compute \(\vecq[\rn](\nfi-1) \gets \nicefrac{1}{\scalalpha(\nfi-1)}
       (\vecu[\rn](\nfi-1) - \vecu[\rn](\nfi))\)
\State Compute \(\vecs[\rn](\nfi-1) \gets \nicefrac{1}{\scalalpha(\nfi-1)}
       (\vecr[\rn](\nfi-1) - \vecr[\rn](\nfi))\)
\State Compute \(\vecp[\rn](\nfi-1) \gets \nicefrac{1}{\scalalpha(\nfi-1)}
       (\vecx[\rn](\nfi) - \vecx[\rn](\nfi-1))\)
       \label{alg:esrppcg:p}
\State Continue in \Lin{ppcg}{n} at iteration~\nfi
\end{algorithmic}
\end{algorithm*}

However, depending on the sparsity pattern of \matA, not all vector elements
are necessarily sent to other nodes during the SpMV computation.
For this reason, we employ a strategy that guarantees enough redundant copies
of each vector element after the SpMV operation while preferring local over
global communication~\cite{Chen2011a, Pachajoa2018a, Pachajoa2019a}.
For many practical scenarios, it is adequate to support only one node failure
at a time.
Consequently, the recovery process has to be finished before another node
failure may occur.
In this case, we have to ensure one \emph{redundant} copy of each vector
element, i.e., one copy \emph{additional} to the copy of the owner.
We present a generalized redundancy strategy that is capable of supporting
\(1 \leq \nredu < \nn\) \emph{simultaneous} node failures by keeping \nredu
redundant copies of each vector element on \nredu nodes different from its
owner~\cite{Pachajoa2019a}.

Let \(S_j\) be the set of all elements of \vecp[j](i) (for \pcg), \vecm[j](i)
(for \ppcg and \ppcr), or \vecc[j](i+1) (for \tppcg), and let \(S_{jk}\) denote
the set of all elements of \(S_j\) sent to node \(k\) during the computation of
\(\matA \vecp(i)\) (for \pcg), \(\matA \vecm(i)\) (for \ppcg and \ppcr), or
\(\matA \vecc(i+1)\) (for \tppcg).
Furthermore, let \((S_j, m_j)\) denote a multiset with the multiplicity
\begin{equation}
\label{eqn:redundancy:mj}
\begin{aligned}
m_j \colon S_j \to~&\N_0 \\
s \mapsto~&\text{number of nodes}~s~\text{is sent to} \\[-0.7ex]
          &\text{during the SpMV computation,}
\end{aligned}
\end{equation}
let the function \(d_{jk}\) be defined as
\begin{equation}
\label{eqn:redundancy:djk}
d_{jk} \coloneqq \left\{
\begin{array}{ll}
\left( j + \left\lceil \frac{k}{2} \right\rceil \right) \bmod \nn, &
  \text{if}~k~\text{odd} \\
\left( j - \frac{k}{2} \right) \bmod \nn, & \text{if}~k~\text{even,}
\end{array}
\right.
\end{equation}
and let \(g_j(s)\) be the number of sets \(S_{j d_{jk}}\) with \(s \in
S_{j d_{jk}}\) for all \(k \in \{1, 2, \dots, \nredu\}\).
Then, the necessary data redundancy for tolerating up to \nredu simultaneous
node failures is guaranteed by sending the elements of the set
\begin{equation}
\label{eqn:redundancy:Rjk}
\begin{aligned}
R_{jk} \coloneqq \{ s \in S_j \; | \; &s \notin S_{j d_{jk}} \; \land \\
                                      &m_j(s) - g_j(s) \leq \nredu - k \}
\end{aligned}
\end{equation}
to node \(d_{jk}\) for all \(j \in \{1, 2, \dots, \nn\}\) and \(k \in \{1, 2,
\dots, \nredu\}\) during the SpMV computation~\cite{Pachajoa2019a}.
\(R_{jk}\) always is of minimal size and it holds that \(|R_{j1}| \geq |R_{j2}|
\geq \dots \geq |R_{j \nredu}|\).
Note that the elements of \(R_{jk}\) are sent to node \(d_{jk}\) together with
the elements of \(S_{j d_{jk}}\), which have to be sent anyway according to the
sparsity pattern of \matA.
Hence, in many cases, no extra latency cost is incurred for establishing new
connections.

According to our strategy, the nodes selected for receiving the redundant
vector element copies always are close neighbor nodes
(cf.\ \Eqn{redundancy:djk}).
Therefore, the communication overhead compared to the non-resilient standard
SpMV computation only consists of \emph{local} communication and, thus, is
perfectly appropriate for overlapping the global reduction operations in the
communication-hiding \pcg and \pcr solvers.
It can theoretically be shown~\cite{Pachajoa2019a} that the communication
overhead for keeping \nredu redundant copies of all elements of the vector
involved in the SpMV is bounded between 0 and \(\nredu (\latency_{\max} +
\lceil \nicefrac{n}{\nn} \rceil \invbw)\), where \(\latency_{\max}\) is the
maximum latency for establishing a new connection and \invbw is the
communication cost per vector element.
However, for large-scale parallel computers, it is plausible that---even in the
worst case of maximum local communication overhead for the data redundancy
during the SpMV---the global reduction operation is more expensive than the
preconditioner application and SpMV computation together, which are overlapping
the global reduction in the \ppcg, \ppcr, and \tppcg solvers.

\section{Recovery from node failures}
\label{sec:recovery}

\begin{algorithm*}[t]
\caption{Node failure recovery (on replacement node~\rn) for the \ppcr method}
\label{alg:esrppcr}
\begin{algorithmic}[1]
\State Gather \vecu[\compl{\rn}](\nfi-1), \vecu[\compl{\rn}](\nfi),
       \vecw[\compl{\rn}](\nfi-1), \vecw[\compl{\rn}](\nfi),
       \vecx[\compl{\rn}](\nfi-1), and \vecx[\compl{\rn}](\nfi)
\State Retrieve static data \matA[\rn][\all], \matP[\rn][\all], and \vecb[\rn]
\State Retrieve redundant copies of \scalalpha(\nfi-1), \scalgamma(\nfi-1),
       \scalgamma(\nfi), \scaldelta(\nfi), \vecm[\rn](\nfi-1), and
       \vecm[\rn](\nfi)
\State Compute \(\vecconcat{\vectilde[\rn](\nfi-1){m}}{\vectilde[\rn](\nfi){m}}
       \gets \vecconcat{\vecm[\rn](\nfi-1)}{\vecm[\rn](\nfi)} -
       \matP[\rn][\compl{\rn}]
       \vecconcat{\vecw[\compl{\rn}](\nfi-1)}{\vecw[\compl{\rn}](\nfi)}\)
\State Solve \(\matP[\rn][\rn] \vecconcat{\vecw[\rn](\nfi-1)}{\vecw[\rn](\nfi)}
       = \vecconcat{\vectilde[\rn](\nfi-1){m}}{\vectilde[\rn](\nfi){m}}\) for
       \vecconcat{\vecw[\rn](\nfi-1)}{\vecw[\rn](\nfi)}
\State Compute \(\vecconcat{\vectilde[\rn](\nfi-1){w}}{\vectilde[\rn](\nfi){w}}
       \gets \vecconcat{\vecw[\rn](\nfi-1)}{\vecw[\rn](\nfi)} -
       \matA[\rn][\compl{\rn}]
       \vecconcat{\vecu[\compl{\rn}](\nfi-1)}{\vecu[\compl{\rn}](\nfi)}\)
\State Solve \(\matA[\rn][\rn] \vecconcat{\vecu[\rn](\nfi-1)}{\vecu[\rn](\nfi)}
       = \vecconcat{\vectilde[\rn](\nfi-1){w}}{\vectilde[\rn](\nfi){w}}\) for
       \vecconcat{\vecu[\rn](\nfi-1)}{\vecu[\rn](\nfi)}
\State Compute \(\vecconcat{\vectilde[\rn](\nfi-1){b}}{\vectilde[\rn](\nfi){b}}
       \gets \matP[\rn][\all] \vecconcat{\vecb[\rn]}{\vecb[\rn]} -
       \vecconcat{\vecu[\rn](\nfi-1)}{\vecu[\rn](\nfi)} -
       \matP[\rn][\all] (\matA[\all][\compl{\rn}]
       \vecconcat{\vecx[\compl{\rn}](\nfi-1)}{\vecx[\compl{\rn}](\nfi)})\)
       \label{alg:esrppcr:btilde}
\State Solve \((\matP[\rn][\all] \matA[\all][\rn])
       \vecconcat{\vecx[\rn](\nfi-1)}{\vecx[\rn](\nfi)}
       = \vecconcat{\vectilde[\rn](\nfi-1){b}}{\vectilde[\rn](\nfi){b}}\) for
       \vecconcat{\vecx[\rn](\nfi-1)}{\vecx[\rn](\nfi)}
       \label{alg:esrppcr:x}
\State Compute \(\vecz[\rn](\nfi-1) \gets \nicefrac{1}{\scalalpha(\nfi-1)}
       (\vecw[\rn](\nfi-1) - \vecw[\rn](\nfi))\)
\State Compute \(\vecq[\rn](\nfi-1) \gets \nicefrac{1}{\scalalpha(\nfi-1)}
       (\vecu[\rn](\nfi-1) - \vecu[\rn](\nfi))\)
\State Compute \(\vecp[\rn](\nfi-1) \gets \nicefrac{1}{\scalalpha(\nfi-1)}
       (\vecx[\rn](\nfi) - \vecx[\rn](\nfi-1))\)
\State Continue in \Lin{ppcr}{n} at iteration~\nfi
\end{algorithmic}
\end{algorithm*}

After discussing how to ensure sufficient data redundancy in \Sec{redundancy},
we now focus on the recovery process after actual node failures.
Our goal is to reconstruct the full state of the communication-hiding iterative
solver such that it is able to continue as if no node failure occurred.
First, in \Sec{recovery:ppcg}, we illustrate the main principles and derive the
recovery procedure for the \ppcg method.
Subsequently, in \Sec{recovery:other}, we highlight the differences for the
\ppcr and \tppcg solvers and show their recovery procedures.

\subsection{Recovery process for pipelined \pcg}
\label{sec:recovery:ppcg}

We can distinguish between \emph{static} and \emph{dynamic} iterative solver
data.
Static data is defined as input data that does not change during the execution
of the iterative solver.
Analogous to Chen~\cite{Chen2011a} and Agullo et al.~\cite{Agullo2016a}, we
assume that static data can always be retrieved from reliable external storage
like a checkpoint taken prior to entering the iterative solver.
For the \ppcg method (as well as the \pcg, \ppcr, and \tppcg solvers), the
system matrix \matA, the preconditioner \matM, and the right-hand-side vector
\vecb are considered to be static data.
On the other hand, dynamic solver data is continuously modified by the
iterative solver and can be either equal on all nodes or unique to each node.
For our \pcg and \pcr variants, scalars that are results of global reduction
operation are equal on all \nn nodes.
Those scalars can hence easily be retrieved from any of the surviving nodes
after a node failure.
In contrast, dynamic vector data is unique to each node since each vector is
distributed among all \nn nodes.
Therefore, parts of those vectors are lost in case of a node failure and need
to be reconstructed on the replacement node.

We first focus on the recovery process for the case of a single node failure.
The generalization to multiple simultaneous node failures will later be
straightforward.
For simplifying the notation, we define \(\matP \coloneqq \matMinv\).
Furthermore, we assume that \matP (not \matM) is given as input data of the
iterative solver.
For the classical \pcg method, Chen~\cite{Chen2011a} derives a procedure for
recovery after a node failure.
We show a variant of this recovery procedure by Pachajoa et
al.~\cite{Pachajoa2018a} in \Alg{esrpcg}.
In \Lin{esrpcg}{redundant}, the redundant copies of the lost elements of the
latest two search direction vectors, \vecp[\rn](\nfi-1) and \vecp[\rn](\nfi),
are retrieved from the backup nodes according to the redundancy strategy
described in \Sec{redundancy}.
In \Lins{esrpcg}{r}{x}, two local linear systems are solved locally on the
replacement node \rn.
Note that these local systems are typically very small compared to the given
linear system \(\matA \vecx = \vecb\).

\begin{algorithm*}[t]
\caption{Node failure recovery (on replacement node~\rn) for the \tppcg method}
\label{alg:esr2ppcg}
\begin{algorithmic}[1]
\State Gather \vecc[\compl{\rn}](\nfi), \vecc[\compl{\rn}](\nfi+1),
       \vecm[\compl{\rn}](\nfi), \vecm[\compl{\rn}](\nfi+1),
       \vecn[\compl{\rn}](\nfi), \vecn[\compl{\rn}](\nfi+1),
       \vecr[\compl{\rn}](\nfi), \vecr[\compl{\rn}](\nfi+1),
       \vecu[\compl{\rn}](\nfi), \vecu[\compl{\rn}](\nfi+1),
       \vecw[\compl{\rn}](\nfi), \vecw[\compl{\rn}](\nfi+1),
       \vecx[\compl{\rn}](\nfi), and \vecx[\compl{\rn}](\nfi+1)
\State Retrieve static data \matA[\rn][\all], \matP[\rn][\all], and \vecb[\rn]
\State Retrieve redundant copies of \scalgamma(\nfi), \scalgamma(\nfi+1),
       \scalzeta(\nfi+1), \scaleta(\nfi+1), \scallambda[1], \scallambda[2],
       \scallambda[3], \scallambda[4], \scallambda[5], \scallambda[6],
       \scallambda[7], \scallambda[8],
       \vecc[\rn](\nfi), and \vecc[\rn](\nfi+1)
\State Assign \(\scaldelta(i+1) \gets \scallambda[1]\)
\State Compute \(\vecconcat{\vecd[\rn](\nfi)}{\vecd[\rn](\nfi+1)}
       \gets \matA[\rn][\all] \vecconcat{\vecc(\nfi)}{\vecc(\nfi+1)}\)
\State Compute \(\vecconcat{\vectilde[\rn](\nfi){c}}{\vectilde[\rn](\nfi+1){c}}
       \gets \vecconcat{\vecc[\rn](\nfi)}{\vecc[\rn](\nfi+1)} -
       \matP[\rn][\compl{\rn}]
       \vecconcat{\vecn[\compl{\rn}](\nfi)}{\vecn[\compl{\rn}](\nfi+1)}\)
\State Solve \(\matP[\rn][\rn] \vecconcat{\vecn[\rn](\nfi)}{\vecn[\rn](\nfi+1)}
       = \vecconcat{\vectilde[\rn](\nfi){c}}{\vectilde[\rn](\nfi+1){c}}\) for
       \vecconcat{\vecn[\rn](\nfi)}{\vecn[\rn](\nfi+1)}
\State Compute \(\vecconcat{\vectilde[\rn](\nfi){n}}{\vectilde[\rn](\nfi+1){n}}
       \gets \vecconcat{\vecn[\rn](\nfi)}{\vecn[\rn](\nfi+1)} -
       \matA[\rn][\compl{\rn}]
       \vecconcat{\vecm[\compl{\rn}](\nfi)}{\vecm[\compl{\rn}](\nfi+1)}\)
\State Solve \(\matA[\rn][\rn] \vecconcat{\vecm[\rn](\nfi)}{\vecm[\rn](\nfi+1)}
       = \vecconcat{\vectilde[\rn](\nfi){n}}{\vectilde[\rn](\nfi+1){n}}\) for
       \vecconcat{\vecm[\rn](\nfi)}{\vecm[\rn](\nfi+1)}
\State Compute \(\vecconcat{\vectilde[\rn](\nfi){m}}{\vectilde[\rn](\nfi+1){m}}
       \gets \vecconcat{\vecm[\rn](\nfi)}{\vecm[\rn](\nfi+1)} -
       \matP[\rn][\compl{\rn}]
       \vecconcat{\vecw[\compl{\rn}](\nfi)}{\vecw[\compl{\rn}](\nfi+1)}\)
\State Solve \(\matP[\rn][\rn] \vecconcat{\vecw[\rn](\nfi)}{\vecw[\rn](\nfi+1)}
       = \vecconcat{\vectilde[\rn](\nfi){m}}{\vectilde[\rn](\nfi+1){m}}\) for
       \vecconcat{\vecw[\rn](\nfi)}{\vecw[\rn](\nfi+1)}
\State Compute \(\vecconcat{\vectilde[\rn](\nfi){w}}{\vectilde[\rn](\nfi+1){w}}
       \gets \vecconcat{\vecw[\rn](\nfi)}{\vecw[\rn](\nfi+1)} -
       \matA[\rn][\compl{\rn}]
       \vecconcat{\vecu[\compl{\rn}](\nfi)}{\vecu[\compl{\rn}](\nfi+1)}\)
\State Solve \(\matA[\rn][\rn] \vecconcat{\vecu[\rn](\nfi)}{\vecu[\rn](\nfi+1)}
       = \vecconcat{\vectilde[\rn](\nfi){w}}{\vectilde[\rn](\nfi+1){w}}\) for
       \vecconcat{\vecu[\rn](\nfi)}{\vecu[\rn](\nfi+1)}
\State Compute \(\vecconcat{\vectilde[\rn](\nfi){u}}{\vectilde[\rn](\nfi+1){u}}
       \gets \vecconcat{\vecu[\rn](\nfi)}{\vecu[\rn](\nfi+1)} -
       \matP[\rn][\compl{\rn}]
       \vecconcat{\vecr[\compl{\rn}](\nfi)}{\vecr[\compl{\rn}](\nfi+1)}\)
\State Solve \(\matP[\rn][\rn] \vecconcat{\vecr[\rn](\nfi)}{\vecr[\rn](\nfi+1)}
       = \vecconcat{\vectilde[\rn](\nfi){u}}{\vectilde[\rn](\nfi+1){u}}\) for
       \vecconcat{\vecr[\rn](\nfi)}{\vecr[\rn](\nfi+1)}
\State Compute \(\vecconcat{\vectilde[\rn](\nfi){b}}{\vectilde[\rn](\nfi+1){b}}
       \gets \vecconcat{\vecb[\rn]}{\vecb[\rn]} -
       \vecconcat{\vecr[\rn](\nfi)}{\vecr[\rn](\nfi+1)} -
       \matA[\rn][\compl{\rn}]
       \vecconcat{\vecx[\compl{\rn}](\nfi)}{\vecx[\compl{\rn}](\nfi+1)}\)
\State Solve \(\matA[\rn][\rn] \vecconcat{\vecx[\rn](\nfi)}{\vecx[\rn](\nfi+1)}
       = \vecconcat{\vectilde[\rn](\nfi){b}}{\vectilde[\rn](\nfi+1){b}}\) for
       \vecconcat{\vecx[\rn](\nfi)}{\vecx[\rn](\nfi+1)}
\State Continue in \Lin{2ppcg}{gh} at iteration~\nfi
\end{algorithmic}
\end{algorithm*}

We now derive a recovery procedure that reconstructs the full state of the
communication-hiding \ppcg solver.
\begin{lemma}
\label{lem:locsys}
Let \(\mat{B} \vec{y} = \vec{v}\), where \(\mat{B} \in \R[n][n]\) is
an SPD matrix, \(\vec{y} \in \R[n]\), and \(\vec{v} \in \R[n]\).
Then, the lost elements \vec[\rn]{y} of the vector \vec{y} can be
reconstructed after a node failure by solving the linear system
\[\mat[\rn][\rn]{B} \vec[\rn]{y} = \vec[\rn]{v} -
  \mat[\rn][\compl{\rn}]{B} \vec[\compl{\rn}]{y},\]
where \mat[\rn][\rn]{B} has full rank.
\end{lemma}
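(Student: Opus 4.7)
The plan is to partition the linear system $\mat{B}\vec{y} = \vec{v}$ by splitting every row and column into the part owned by the failed node $\rn$ and the part owned by the surviving nodes $\compl{\rn}$. Reading off only the block of equations corresponding to rows owned by $\rn$ gives
\[\mat[\rn][\rn]{B}\,\vec[\rn]{y} + \mat[\rn][\compl{\rn}]{B}\,\vec[\compl{\rn}]{y} = \vec[\rn]{v},\]
which rearranges directly to the local system in the statement. Every quantity on the right-hand side is accessible on the replacement node: $\vec[\rn]{v}$ is the block of the (known) right-hand side assigned to $\rn$, $\vec[\compl{\rn}]{y}$ is obtained by gathering from the surviving nodes, and the matrix blocks $\mat[\rn][\rn]{B}$ and $\mat[\rn][\compl{\rn}]{B}$ belong to the static data that can be retrieved from reliable storage.

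The only nontrivial point is to verify that $\mat[\rn][\rn]{B}$ has full rank, which I would establish via the standard fact that a principal submatrix of an SPD matrix is itself SPD. Concretely, for any nonzero $\vec{z}$ indexed by the rows owned by $\rn$, padding it with zeros on the indices in $\compl{\rn}$ yields a nonzero vector $\tilde{\vec{z}} \in \R[n]$ with $\tilde{\vec{z}}^{\top}\mat{B}\tilde{\vec{z}} = \vec{z}^{\top}\mat[\rn][\rn]{B}\vec{z}$. Positive definiteness of $\mat{B}$ forces this quadratic form to be strictly positive, so $\mat[\rn][\rn]{B}$ is SPD and hence invertible. Consequently, the rearranged local system has a unique solution, which by construction must coincide with the lost block $\vec[\rn]{y}$ of the original global solution.

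The main ``obstacle'' is essentially bookkeeping rather than mathematics: the argument is a one-line block identity combined with the classical principal-submatrix fact. What actually matters for the recovery procedures applied subsequently in \Sec{recovery:ppcg} is the dimension of the system being solved on node $\rn$. It has size equal only to the number of rows owned by a single node, which is orders of magnitude smaller than the global dimension $n$, so the local solve is cheap compared to a single iteration of the full parallel solver and does not threaten the scalability of the recovery.
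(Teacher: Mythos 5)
Your proof is correct and follows essentially the same route as the paper: extract the block of rows owned by node~\rn, rearrange to isolate \(\mat[\rn][\rn]{B}\vec[\rn]{y}\), and invoke non-singularity of the principal submatrix of an SPD matrix. You merely spell out the zero-padding argument for positive definiteness of \(\mat[\rn][\rn]{B}\), which the paper states without proof.
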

\begin{proof}
Due to \(\mat{B} \vec{y} = \vec{v}\), it holds that
\(\mat[\rn][\all]{B} \vec{y} = \vec[\rn]{v}\).
By reordering the columns of \mat[\rn][\all]{B} and the rows of \vec{y}, it
follows that
\begin{alignat*}{2}
& &
  \begin{pmatrix} \mat[\rn][\rn]{B} &
                  \mat[\rn][\compl{\rn}]{B} \end{pmatrix}
  \begin{pmatrix} \vec[\rn]{y} \\
                  \vec[\compl{\rn}]{y} \end{pmatrix}
  &= \vec[\rn]{v} \\
&\iff & \mat[\rn][\rn]{B} \vec[\rn]{y}
  &= \vec[\rn]{v} -
     \mat[\rn][\compl{\rn}]{B} \vec[\compl{\rn}]{y}.
\end{alignat*}
Since \mat{B} is an SPD matrix, the square diagonal block \mat[\rn][\rn]{B} is
non-singular and, thus, the linear system has a unique solution.
\end{proof}
After \vec[\compl{\rn}]{y} has been gathered from the other nodes, the
linear system \(\mat[\rn][\rn]{B} \vec[\rn]{y} = \vec[\rn]{v} -
\mat[\rn][\compl{\rn}]{B} \vec[\compl{\rn}]{y}\) can be solved locally on
the replacement node \rn.
Similar to the local linear systems in the recovery procedure for \pcg, this
linear system typically is very small compared to \(\mat{B} \vec{y} =
\vec{v}\).

The recovery procedure for the \ppcg solver is listed in \Alg{esrppcg}.
In \Lin{esrppcg}{redundant}, the redundant copies of \vecm[\rn](\nfi-1) and
\vecm[\rn](\nfi) (cf.\ \Sec{redundancy}) are retrieved.
Then, in \LINS{esrppcg}{mtilde}{x}, eight local linear systems are solved
(possibly pairwise).
Those linear systems can be derived by applying \Lem{locsys} to the
vector-defining equations \(\matP \vecw(i) = \vecm(i)\), \(\matA \vecu(i) =
\vecw(i)\), and \(\matP \vecr(i) = \vecu(i)\) (cf.\ \Sec{solvers:ppcg}) as well
as the residual relation \(\matA \vecx(i) = \vecb - \vecr(i)\).
Afterwards, in \LINS{esrppcg}{z}{p}, the lost elements of the remaining
vectors are locally computed based on the results of the previously solved
linear systems.
Those equations can be obtained by rearranging the \ppcg recurrence relations
in \LINS{ppcg}{x}{w}.
The gather operations in \Lin{esrppcg}{gather} may be non-blocking in order to
overlap them with computation.

If the node failure occurs during iteration \nfi, \Alg{esrppcg} reconstructs
the state at iteration \nfi if (and only if) both the global reduction
operation (\Lin{ppcg}{gammadelta}) and the SpMV computation (\Lin{ppcg}{n})
are already finished prior to the node failure.
Else, it recovers the state at iteration \(\nfi - 1\).
In case of multiple node failures, \(\nfail \leq \nredu\)
(cf.\ \Sec{redundancy}) simultaneous node failures occur.
Let \(\rn_1, \rn_2, \dots, \rn_{\nfail}\) be the nodes that fail.
Then, \Alg{esrppcg} is also appropriate for recovering from multiple
simultaneous node failures if we define the subscript \rn in \Alg{esrppcg} to
denote the union of the indices of the rows owned by nodes \(\rn_1, \rn_2,
\dots, \rn_{\nfail}\) (cf.\ \Sec{introduction:terminology}).
Some of the recovery steps of \Alg{esrppcg} can be performed locally on each of
the replacement nodes.
However, for computing the SpMVs and solving the linear systems in
\LINS{esrppcg}{mtilde}{x}, additional communication between the \nfail
replacement nodes is required.

\subsection{Recovery processes for other solvers}
\label{sec:recovery:other}

\Algs{esrppcr}{esr2ppcg} show the recovery processes for the \ppcr and \tppcg
methods, respectively.
In contrast to \Alg{esrppcg}, the vector elements \vecr[\rn](i-1),
\vecr[\rn](i), and \vecs[\rn](i-1) are not computed in \Alg{esrppcr} since
the corresponding vectors are not available in \ppcr.
Hence, the residual \vecr(i) has to be replaced by the preconditioned residual
\vecu(i) for restoring the lost elements \vecx[\rn](i) of the iterate.
This can be achieved by replacing \(\matA \vecx(i)\) with \(\matA[\all][\rn]
\vecx[\rn](i) + \matA[\all][\compl{\rn}] \vecx[\compl{\rn}](i)\) in the
residual relation \(\matA \vecx(i) = \vecb - \vecr(i)\) and then by
left-multiplying the residual relation with \matP[\rn][\all], which leads to
\Lins{esrppcr}{btilde}{x}.
\Alg{esr2ppcg} does not use any rearranged recurrence relations.
Instead, it solves in total twelve local linear systems derived with
\Lem{locsys}.

\section{Experiments}
\label{sec:experiments}

In this section, we describe our implementation and experimental evaluation of
the \ppcg method (cf.\ \Sec{solvers:ppcg} and \Alg{ppcg}) and our novel
algorithm for protecting the \ppcg solver against node failures
(cf.\ \Secs{redundancy}{recovery:ppcg} as well as \Alg{esrppcg}).
We show experimental results of our resilient \ppcg solver in comparison to
the non-resilient standard \ppcg solver measured on a small high-performance
computer cluster.

\subsection{Test data}
\label{sec:experiments:data}

The matrices used in our experiments were taken from the SuiteSparse Matrix
Collection~\cite{Davis2011a}.
\Tab{experiments:matrices} summarizes their most important properties.
Ghysels and Vanroose~\cite{Ghysels2014a} point out that the \ppcg algorithm is
best suited for matrices where the SpMV only requires local communication
between neighboring nodes (i.e., banded matrices), since overlapping the global
dot product with the SpMV computation will yield the best results in these
cases.
Among our test matrices, M1--M4, M8, and M9 fulfill this criterion.

\begin{figure}[t]
\begin{center}
\includegraphics[width=\linewidth]{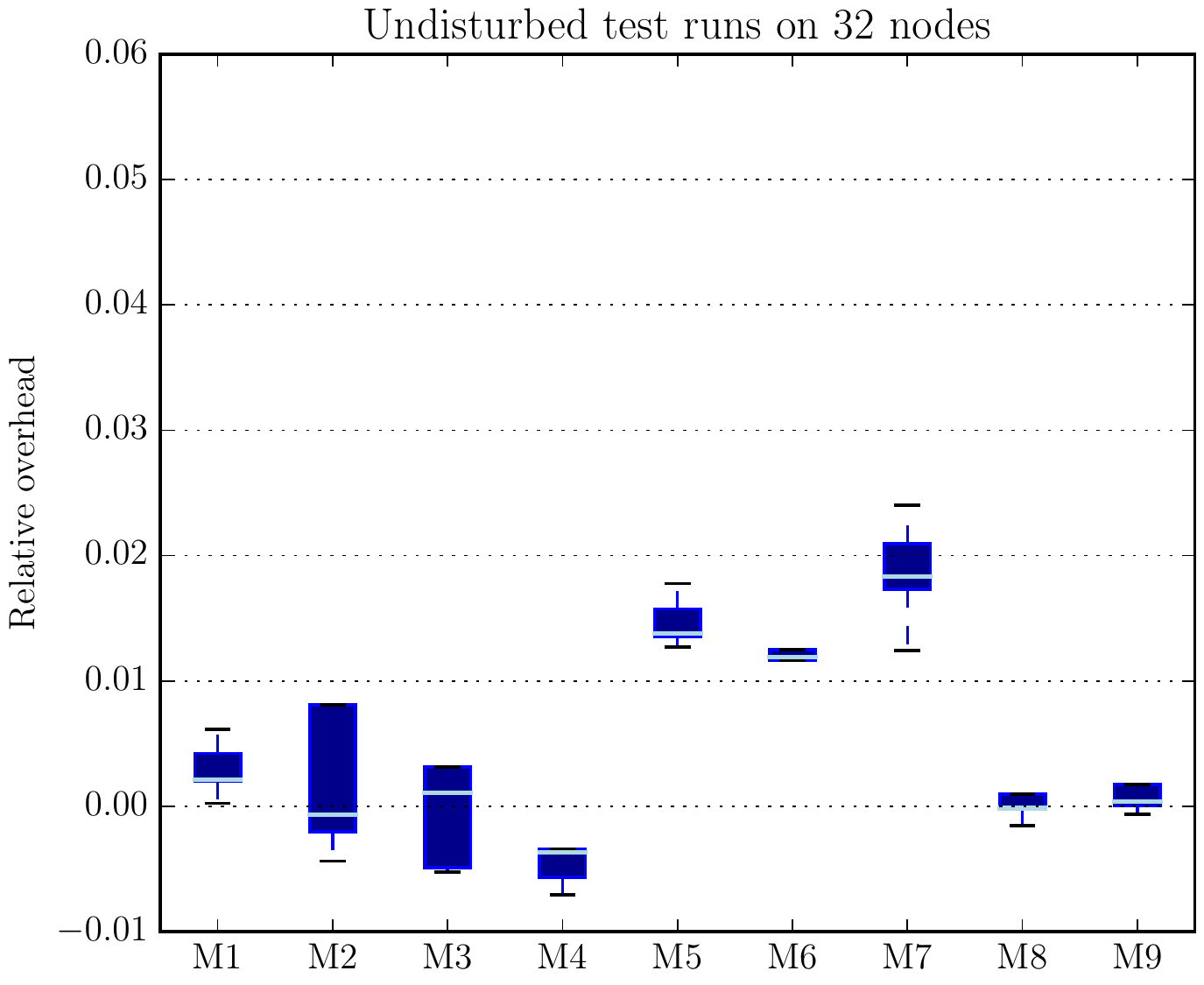}
\caption{Relative runtime overheads when sending the additional vector elements needed to achieve the desired redundancy, compared to the non-resilient case}
\label{fig:overheads:undisturbed}
\end{center}
\end{figure}

\begin{figure}[t]
\begin{center}
\includegraphics[width=\linewidth]{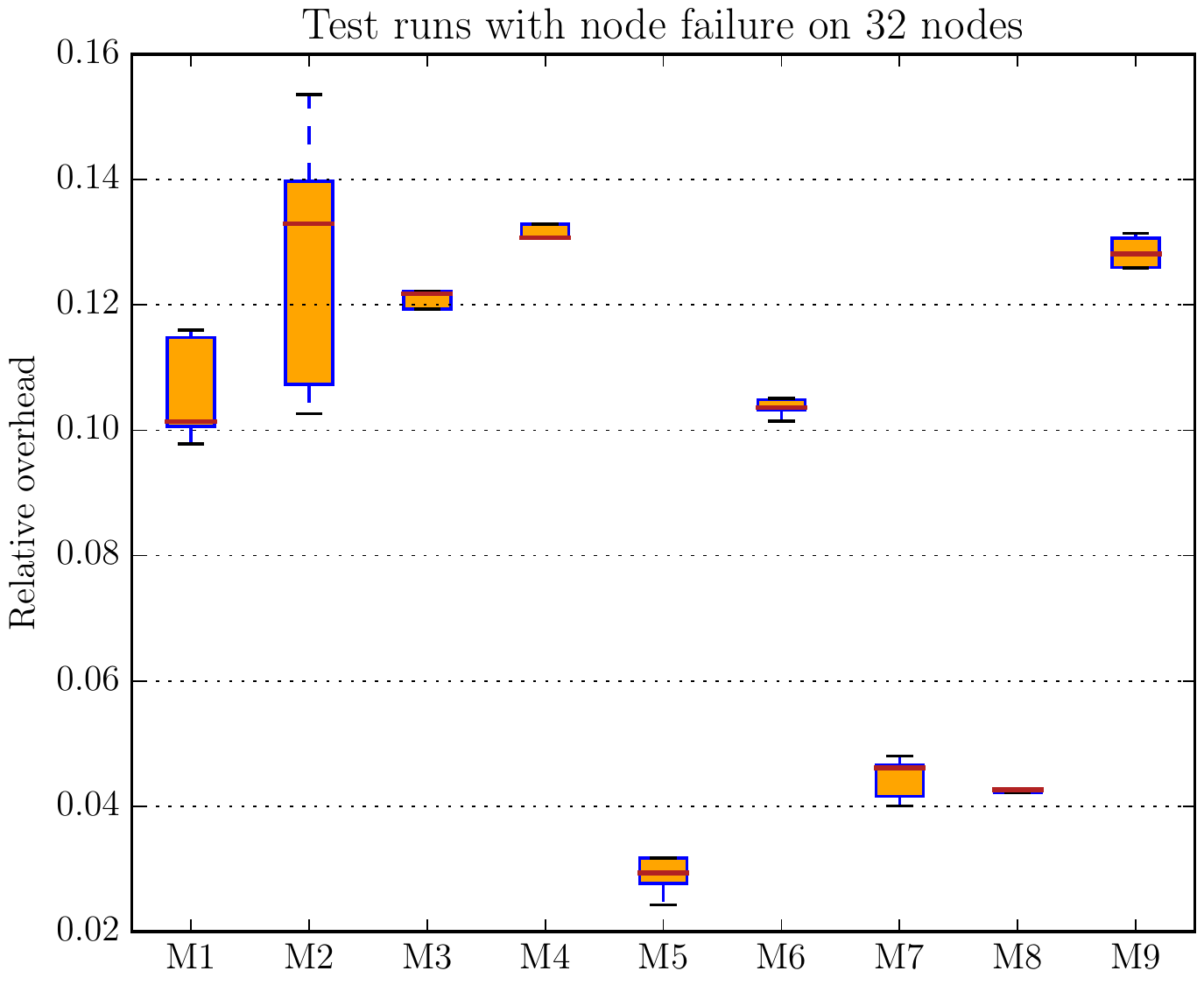}
\caption{Relative runtime overheads when simulating a node failure and reconstructing the state of the solver, compared to the non-resilient case}
\label{fig:overheads:failure}
\end{center}
\end{figure}

\begin{table}[t]
  \caption{Properties of the test matrices}
  \label{tab:experiments:matrices}
  \begin{center}
    \begin{tabu}{@{} l l S[table-format=7.0(0),table-number-alignment=left]
                 S[table-format=8.0(0),table-number-alignment=left] @{}}
      \toprule

      ID &
      Name &
      \multicolumn{1}{l}{Size \(n\)} &
      \multicolumn{1}{l}{Non-zeros} \\

      \midrule

      M1 & bcsstk18        &   11948 &   149090 \\
      M2 & s1rmt3m1        &    5489 &   217651 \\
      M3 & s1rmq4m1        &    5489 &   262411 \\
      M4 & bcsstk17        &   10974 &   428650 \\
      M5 & parabolic\_fem  &  525825 &  3674625 \\
      M6 & offshore        &  259789 &  4242673 \\
      M7 & G3\_circuit     & 1585478 &  7660826 \\
      M8 & Emilia\_923     &  923136 & 40373538 \\
      M9 & Hook\_1498      & 1498023 & 59374451 \\

      \bottomrule
    \end{tabu}
  \end{center}
\end{table}

\subsection{Implementation and experimental setup}

We implemented the parallel \ppcg algorithm in C, using the GNU Scientific
Library (GSL) to store data structures like vectors and matrices, and MPI for
parallelization.
In our experiments we used GSL 2.5, Intel MPI 2018 Update 4, and OpenBLAS 0.3.5
for BLAS operations with GSL.
We compiled with the Intel C compiler 18.0.5 with compiler flag \texttt{-O3}.

The convergence criterion for our solver is a reduction of the relative
residual norm by a factor of $10^{-8}$.
For solving the local linear systems during the reconstruction phase, we used a
factor of $10^{-11}$ as convergence criterion.
As suggested by Ghysels and Vanroose~\cite{Ghysels2014a}, our \ppcg
implementation provides the opportunity to perform residual replacement to
improve the accuracy of the result.
In all of our test runs, residual replacement was performed every 50 iterations
(this value is also used in~\cite{Ghysels2014a}).

Our experiments were executed on 32 nodes of the ``Hydra'' cluster situated at
TU Wien.
We used one process per node, which is sufficient to obtain representative
results for the reconstruction phase.

\begin{table*}[t]
  \caption{Experimental results}
  \label{tab:experiments:results}
  \begin{center}
    \begin{tabu}{@{} l
        S[table-format=2.2,table-number-alignment=left]
        S[table-format=5.0,table-number-alignment=left]
        S[table-format=+1.2,table-number-alignment=left]
        S[table-format=5.0,table-number-alignment=left]
        S[table-format=2.2,table-number-alignment=left] @{}}
      \toprule

			ID &
			\(t_0\) [s] &
			\multicolumn{1}{l}{Iterations until} &
			\multicolumn{1}{l}{Relative overhead with} &
			\multicolumn{1}{l}{Iterations until convergence} &
			\multicolumn{1}{l}{Relative overhead} \\

			&
			&
			\multicolumn{1}{l}{convergence} &
			\multicolumn{1}{l}{redundant copies [\%]} &
			\multicolumn{1}{l}{if node failure occurs} &
			\multicolumn{1}{l}{with node failure [\%]} \\

			\midrule

			M1 &
			0.09 & 
			1551 &
			0.21 &
			1550 &
			10.14 \\

			M2 &
			0.03 & 
			741 &
			-0.07 &
			742 &
			13.30 \\

			M3 &
			0.03 & 
			673 &
			0.11 &
			674 &
			12.18 \\

			M4 &
			0.16 & 
			2531 &
			-0.37 &
			2669 &
                        13.07 \\
      
                        M5 &
			4.61 & 
			2554 &
			1.38 &
			2554 &
                        2.94 \\

                        M6 &
			2.80 & 
			1948 &
			1.19 &
			1950 &
                        10.37 \\

                        M7 &
			15.39 & 
			3097 &
			1.83 &
			3097 &
                        4.61 \\

			M8 &
			47.18 & 
			10227 &
			-0.02 &
			10229 &
			4.26 \\

			M9 &
			35.46 & 
			4703 &
			0.04 &
			4703 &
			12.81 \\

      \bottomrule
    \end{tabu}
  \end{center}
\end{table*}

For each matrix, three different sets of test runs were executed.
The first set solves the linear system with the non-resilient standard \ppcg
algorithm.
In the second set, our strategy for guaranteed data redundancy
(cf.\ \Sec{redundancy}) is used, but no node failure occurs during the run.
Finally, in the last set, a node failure is simulated and the state of the
solver is reconstructed as described in \Sec{recovery:ppcg} and \Alg{esrppcg}.
Node failures are introduced after 50\% of the solver progress (i.e., after
50\% of the iterations the solver needs to converge for a particular matrix if
no failure occurs) and are always simulated at the node with rank 0.

All time measurements shown in the following are averaged over five test runs.
They only represent the time needed from the start of the iterative solver
until convergence, ignoring the time needed for setup operations such as
reading the matrix from a file or creating the preconditioner.
Similarly, for the test runs with reconstruction, the measured overheads
concern the time needed for the recovery of the lost vectors and scalars.
The reloading of the system matrix and the preconditioner on the replacement
node is excluded from the time measurements, since this would also be necessary
for any other approach (like checkpointing) and, therefore, does not provide
any information about the performance of our specific method.

\subsection{Results}

\Tab{experiments:results} summarizes the experimental results for our nine test
matrices.
The runtime for executing the non-resilient standard \ppcg solver is denoted as
\(t_0\).
The relative overheads with respect to \(t_0\) are listed in
\Tab{experiments:results} and visualized as boxplots in
\Figs{overheads:undisturbed}{overheads:failure}, showing the case without and
with node failures, respectively.
Note that the number of iterations until convergence marginally varies between
the two cases.
This is due to numerical effects during the reconstruction phase, which may
cause the reconstructed state of the solver to slightly deviate from the state
before the node failure, thus leading to a different subsequent behavior of the
solver.

We observe almost negligible relative overheads of well below 3\% for all our
test runs with additional data redundancy but without any node failures.
For the test runs with pure band matrices (M1--M4, M8, and M9), the overheads
even are within \(\pm 1\%\), which can be explained with system effects.
This indicates that overlapping the global dot product with the SpMV (including
the additional data redundancy) indeed works best for banded matrices.
Hence, our data redundancy strategy (as outlined in \Sec{redundancy}) can be
considered a particularly good fit for the \ppcg solver, which has been
primarily designed for band matrices (cf.\ \Sec{experiments:data}).

The relative overheads of approximately 3\% to 13\% for the test runs with node
failures are in a similar range as previous results for recovering from a node
failure in the context of the classical (non-pipelined) \pcg solver
\cite{Pachajoa2019a}.
This demonstrates the efficiency of our novel recovery algorithm for the \ppcg
method.

\section{Conclusions}
\label{sec:conclusions}

In this paper, we first reviewed three existing communication-hiding and thus
scalable variants of the \pcg and \pcr algorithms.
We then proposed an extension to these algorithms in order to make them
resilient against the potential failure of compute nodes \emph{without}
compromizing the scalability of the algorithms.
In fact, the improved resilience may even have positive effects on the
scalability for massively parallel systems.
Our experimental evaluation of the \ppcg algorithm illustrates that the
overheads caused by ensuring resilience against potential node failures and by
reconstructing the state of the solver after a node failure are very low:
almost negligible in the failure-free scenario and between 3\% and 13\% when a
node fails.
In future work, we want to experimentally investigate the behavior of our
resilient pipelined \pcg solvers on large-scale parallel systems.

\section*{Acknowledgments}
This work has been funded by the Vienna Science and Technology Fund (WWTF)
through project ICT15-113.

\bibliographystyle{siam}
\bibliography{Bibliography}

\balance

\end{document}